\newcommand{\F}{\mathbb{F}}
\newcommand{\X}{{\cal X}}
\newcommand{\G}{{\cal G}}
\newcommand{\J}{{\cal J}}
\newcommand{\I}{{\cal I}}
\newcommand{\E}{{\cal E}}
\newcommand{\sgn}{\mbox{sgn}}
\newcommand{\Pf}{\mbox{Pf}}
\newcommand{\tr}{\mbox{tr}}
\newtheorem{Lemma}{{Lemma}}
\newtheorem{Prop}{{Proposition}}
\begin{document}
\bibliographystyle{IEEEtran}

\title{Normal Factor Graphs:\\ A Diagrammatic Approach to Linear Algebra}  

\author{\IEEEauthorblockN{Ali Al-Bashabsheh \hspace{2.5cm} Yongyi Mao}
				\IEEEauthorblockA{School of Information Technology and Engineering \\
													University of Ottawa\\
													Ottawa, Canada\\
													\{aalba059, yymao\}@site.uottawa.ca}
				\and
				\IEEEauthorblockN{Pascal O.~Vontobel}
        \IEEEauthorblockA{Hewlett--Packard Laboratories\\
                         1501 Page Mill Road\\
                         Palo Alto, CA 94304, USA\\
                         pascal.vontobel@ieee.org}
}

\date{}
\maketitle

\baselineskip 11.5pt

\begin{abstract}
Inspired by some new advances on normal factor graphs (NFGs), we introduce NFGs as a simple and intuitive diagrammatic approach towards encoding some concepts from linear algebra.
We illustrate with examples the workings of such an approach and settle a conjecture of Peterson on the Pfaffian.
\end{abstract}

\section{Introduction}

The application of graphical modeling frameworks and of the computational
tools therein has undoubtedly revolutionized the field of coding and
information theory. In particular, factor-graph (FG) based modeling of codes
\cite{frank:factor} has facilitated efficient message-passing algorithms in
decoding capacity-achieving codes, and normal realizations of codes on graphs
\cite{Forney2001:Normal} have brought to surface an elegant duality property
of codes in a graphical context. It is also remarkable that these advances in
the information theory community have resonated with the recent development in
the computer science and statistical physics communities, where joint efforts
are made towards developing efficient algorithmic solvers for various
intractable computational problems and towards characterizing the hardness of
such problems. For example, Valiant \cite{Valiant2004:Holographic} developed
what he calls ``holographic algorithms'' as polynomial-time solvers for a
number of problem families which were previously unknown to be in P. In a
related work \cite{Chertkov:easy}, Chernyak and Chertkov characterized a rich
family of binary graphical models on planar graphs for which the computation
of the partition function is easy. Valiant's holographic algorithms have also
recently found applications in the information theory community for solving
certain constrained-coding capacity problems \cite{Schwartz:2008}.

Inspired by the holographic algorithms and making a joint effort with Forney \cite{Forney:MacWilliams2}, the authors of \cite{Bashabsheh:HolTrans} have recently re-formalized the notion of normal factor graphs (NFGs) by introducing the ``exterior function'' semantics, which unifies Forney's duality result in normal realizations \cite{Forney2001:Normal} and the so-called ``holant theorem'' used by Valiant \cite{Valiant2004:Holographic} to establish the holographic algorithms. This has ignited, most recently, the work of Forney and Vontobel \cite{Forney2011:PartitionFunction}, which pushes further the power of NFGs as an elegant graphical model.

In this paper, we point out yet another direction that NFGs may demonstrate their advantages, namely, their use as a linear algebraic tool. Much of this work is motivated by the notion of ``trace diagrams,'' which has been recently recognized in the mathematics community as an elegant and intuitive ``diagrammatic approach'' towards expressing notions in linear algebra \cite{Predrag:Birdtracks, Penrose:NegativeTensors, Penrose:Road, 
 Peterson:Unshakling, Peterson2009:1}. Our main objective is to demonstrate how NFGs
can be used to provide diagrammatic proofs of some results, and also to show their ability to generalize trace diagrams. Many of the identities used as examples are reproductions of results from \cite{Peterson2009:1} in the language of NFGs and we make no claim of the originality of  these results. However, we believe that the NFG approach is more general than trace diagrams and may
 potentially provide a wider range of algebraic tools. 
Indeed, the lucid approach of NFGs enables us to prove a conjecture of Peterson on the Pfaffian \cite{Peterson:Cayley-Hamilton} \cite{Peterson:Unshakling}. 

The remainder of this paper is structured as follows. Section~\ref{sec:Prelim} is primarily a review of the NFG framework under the exterior-function semantics. Section~\ref{sec:Derived} introduces  some additional notions of NFGs which will be useful in deriving the results of this paper. The main results and example identities are presented in Section~\ref{sec:LinAlg}. The paper is concluded in Section~\ref{sec:Conc}.

\section{Preliminaries}
\label{sec:Prelim}
\subsection{Some Notation}

In this subsection, we briefly list some of our notations. All functions have finite domains (which might be different) and assume their values from the same field $\F$. A univariate function is written as $f(x)$, which will also denote the $x$th component of a vector $f$.
That is, $f$ will be used to denote a univariate function or a vector, and the distinction will be clear from the context -- all vectors in this work are column vectors. Similarly, a bivariate function is denoted $A(x_1,x_2)$, which will also be used to denote the $(x_1,x_2)$th element of a matrix $A$. 
For any matrix $A$, we will use $A^{T}$ to denote its transpose, and if $A$ is a square matrix, we use $\det(A)$
and $\tr(A)$ to denote its determinant and trace, respectively. 

For any positive integer $n$, let $N = \{1, \ldots, n\}$ and let $S_{n}$ be the symmetric group on $N$.
A permutation $\sigma \in S_{n}$ such that $\sigma(j) = i_{j}$ for all $j \in N$ will be written as
$
\sigma = \left( \begin{array}{cccc}	1&2&\cdots&n \cr i_1&i_2&\cdots&i_n \end{array}		\right)
$.
Finally, we use $\varepsilon$ to denote the
Levi-Civita symbol, which is defined as
\begin{eqnarray*}
\varepsilon(x_1, \ldots, x_n) \!=\! \left\{ \!\!\!\!
\begin{array}{ll}
\sgn\left(\!\!\!\! \begin{array}{ccc}	1 & \!\!\!\! \cdots & \!\!\!\! n \cr	x_1 & \!\!\!\!\cdots  & \!\!\!\! x_n 		\end{array} \!\!\!\!\right),	& \!\!\! \left(\!\!\!\!\begin{array}{ccc}	1 & \!\!\!\! \cdots & \!\!\!\! n \cr x_1 & \!\!\!\! \cdots & \!\!\!\! x_n 		\end{array} \!\!\!\!	\right) \in S_{n} \cr
0,			& \!\!\! \mbox{otherwise}
\end{array} \right.
\end{eqnarray*}
for all $(x_1, \ldots, x_n) \in N^{n}$.
Note that the domain of $\varepsilon$ is specified by the number of its arguments and hence we need not be explicit about it.

\subsection{Normal Factor Graphs}

Let $V, E,$ and $D$ be disjoint finite sets and $\{\X_i : i \in E \cup D\}$ be a collection of finite alphabets. 
Further, for any $\I \subseteq E \cup D$, we use $x_{\I}$ to denote the variables set $\left\{x_{i} : i \in \I \right\}$ where $x_{i} \in {\cal X}_i$ for all $i \in \I$.
A \emph{normal factor graph} (NFG) 
\cite{Forney2001:Normal} is a quadruple
$\Omega = (V, E, D, f_{V} )$ defined as follows.
\begin{itemize}
\item $(V, E)$ is a graph with vertex set $V$ and edge set $E$ where an edge is incident on exactly two vertices.
\item $D$ is a set (possibly empty) of \emph{dangling edges} where in contrast to a regular edge, a dangling edge is incident on \emph{exactly one} vertex in $V.$
\item Using $\E(v) = \{e \in E \cup D : e \mbox{ is incident on } v\}$ to denote the set of edges and dangling edges incident on the vertex $v \in V$,  each vertex $v \in V$ is associated with a \emph{local function} $f_{v}$ involving precisely the variables set $x_{\E(v)}$, 
and $f_{V}$ is the collection of all local functions, namely, the set $\{f_{v} : v \in V\}$.
\end{itemize}

Given an NFG $\G = (V,E,D,f_{V})$, we may associate with it the unique form
\[
\sum_{x_{\!E}} \prod_{v \in V} f_{v}(x_{\E(v)}).
\]
We refer to this as the \emph{sum-of-products form} \emph{represented} by $\G$. Clearly, the sum-of-products form is a function of $x_{\!D}$, which we refer to as
the \emph{exterior function} \emph{realized} by $\G$ and denote it by $Z_{\G}(x_D)$. It is not hard to see that while a sum-of-products form is represented by
a unique NFG, an exterior function may be realized by an infinite class of NFGs. Finally, we say that two NFGs $\G_1$ and $\G_2$ are \emph{equivalent}, and write 
$\G_1 = \G_2$, if $Z_{\G_1} = Z_{\G_2}$. In some subsequent figures we may equate an NFG $\G$ to a function $f$, which strictly speaking is not correct, but what
we really mean is that $Z_{\G} = f$. Such practice is unlikely to raise confusion and makes some of the analysis more transparent.

A particular sum-of-products form that will be useful in developing some of the tools in this paper is what we refer to as the ``simple'' sum-of-products form.
We define the \emph{simple} sum-of-products form, denoted $\langle \cdot | \cdot \rangle$, as the sum-of-products form involving exactly two functions with the summation being over all common variables.
More explicitly, for any functions $f(x_{\I})$ and $g(x_{\J})$, the simple sum-of-products form is defined as
\[
\langle f | g \rangle = \sum_{x_{\I \cap \J}} f(x_{\I}) g(x_{\J}).
\]

Fig.~\ref{fig:SimpleSumProdF} shows the NFG $\G$ representing $\langle f | g \rangle$. In this figure each set of variables is treated as a single variable and hence represented by a single edge. If such a set is empty, the corresponding edge (or dangling edge) simply disappears from the figure. With this understanding, the NFG in 
Fig.~\ref{fig:SimpleSumProdF} may encode several notions from linear algebra, namely:
\begin{itemize}
\item $\I \cap \J = \emptyset$, then $\G$ corresponds to the tensor product.	If further, 
	\begin{itemize}
		\item $\I = \emptyset$ and $\J \neq \emptyset$, then the tensor product becomes the scalar-vector product. 
		\item $\I = \J = \emptyset$, then the tensor product becomes the scalar-scalar product. 
	\end{itemize}
\item $\I \cap \J$, $\I \backslash \J$, and $\J \backslash \I$ are all non-empty, then $\G$ corresponds to the matrix-matrix product.
\item $\I \cap \J \neq \emptyset$, $\I \backslash \J \neq \emptyset$, and $\J \backslash \I = \emptyset$, then $\G$ corresponds to the matrix-vector product.
\item $\I = \J \neq \emptyset$ (i.e., $\I \cap \J \neq \emptyset$ and $\I \backslash \J = \J \backslash \I = \emptyset$), then $\G$ corresponds to the vector dot product.
\end{itemize}

\begin{figure}
	\centering \setlength{\unitlength}{0.5cm}
	\begin{picture}(10,1.5)(0,0)
		\put(-4,-2.5){\includegraphics[scale = .55]{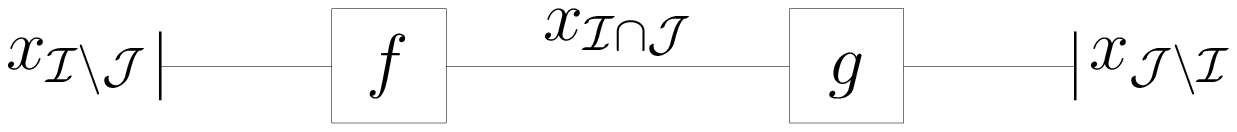} } 
	\end{picture}
	\caption{: The NFG $\G$ representing the simple sum-of-products form
          $\langle f | g \rangle$ where $\I \backslash \J = \{i \in \I : i
          \notin \J \}$ and $\J \backslash \I = \{j \in \J : j \notin \I
          \}$. }
        \vskip-0.25cm
	\label{fig:SimpleSumProdF}
\end{figure}

In this work we find it helpful to adopt the notion of ``ciliation''
\cite{Peterson2009:1} to explicitly indicate the local orderings of variables
at each node. To this end, we add a dot on each node to mark the edge carrying
the local function's first argument and assume the rest of arguments are
encountered in a counter-clockwise manner, {cf.} Fig.~\ref{fig:ciliationF}. We
will not insist on ciliations in every occasion and use them only to
facilitate the analysis.
\vskip0.05cm
\begin{figure}[h]
	\centering \setlength{\unitlength}{0.5cm}
	\begin{picture}(10,5.5)(0,0)
		\put(-3.8,1){\includegraphics[scale = .275]{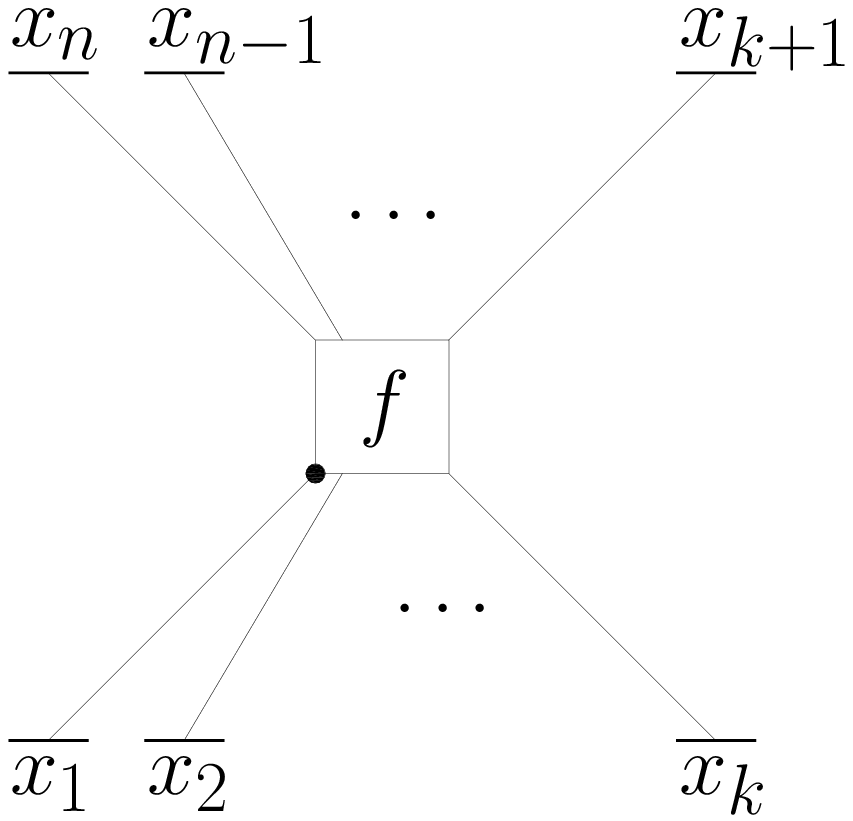} } \put(-1.7,.2){(a)}
		\put(2.4,1){\includegraphics[scale = .275]{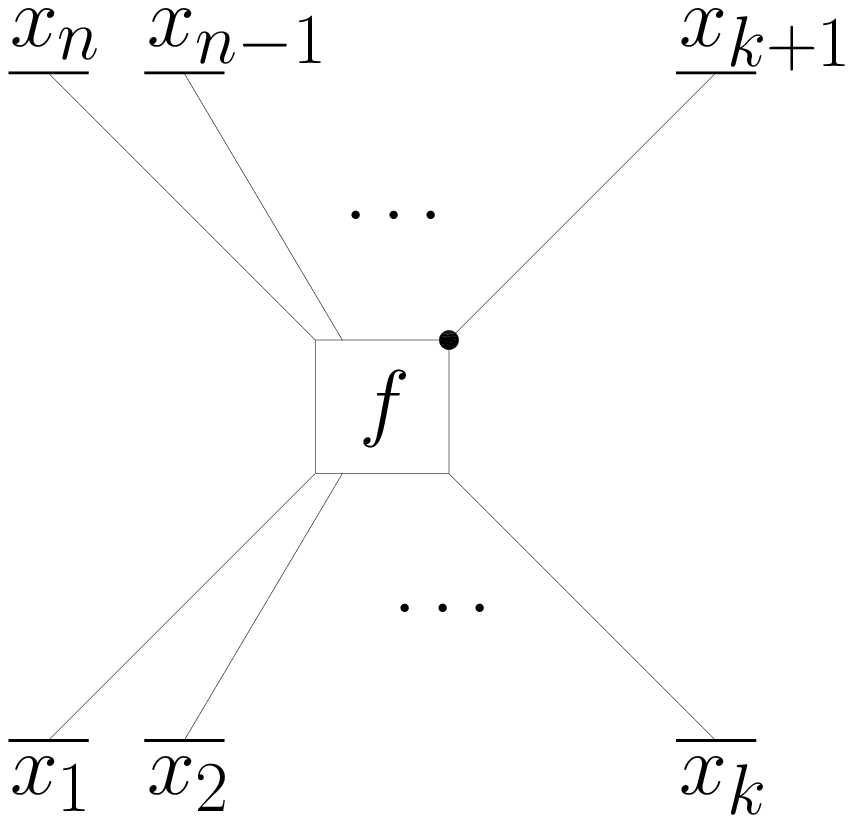} } \put(4.5,.2){(b)}
		\put(8.6,1){\includegraphics[scale = .275]{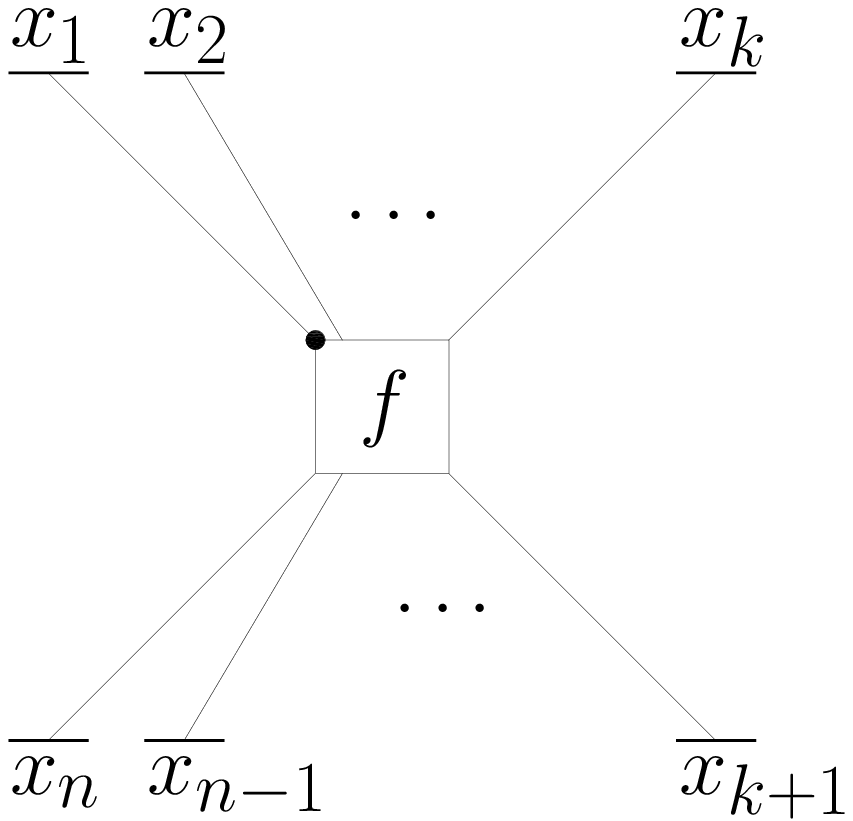} } \put(10.7,.2){(c)}
	\end{picture}
        \vskip-0.1cm
	\caption{: Three assignments of ciliations illustrating different
          orderings of arguments: (a) $f(x_1, \ldots, x_n)$, (b) $f(x_{k+1},
          \ldots, x_n, x_1, \ldots, x_k)$ and (c) $f(x_1, x_n, x_{n-1},
          \ldots, x_2)$. }
        \vskip-0.25cm
	\label{fig:ciliationF}
\end{figure}
\begin{wrapfigure}{r}{0.29\textwidth}
  \begin{center}
    \setlength{\unitlength}{0.25cm}
    \vskip0.75cm
\begin{picture}(10,5.7)(0,0)
		\put(-4.4,1.3){\includegraphics[scale = .35]{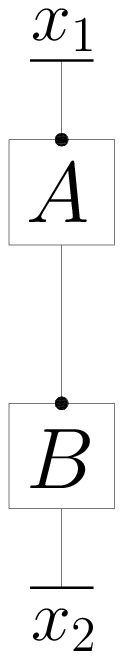} } \put(-3,0){(a)}
		\put(0.6,1.3){\includegraphics[scale = .35]{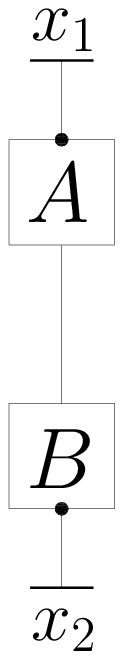} } \put(2,0){(b)}
		\put(5.6,1.3){\includegraphics[scale = .35]{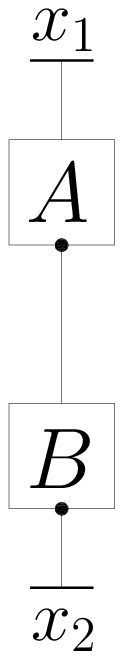} } \put(7,0){(c)}
		\put(10.6,1.3){\includegraphics[scale = .35]{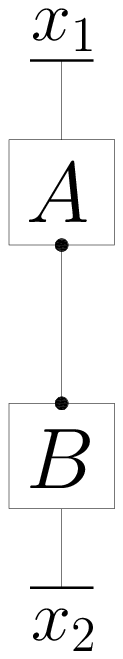} } \put(12,0){(d)} 	
	\end{picture}
  \end{center}
  \caption{: Different arrangements of ciliations: (a) $\G_{1}$, (b)
    $\G_{2}$, (c) $\G_{3}$, and (d) $\G_{4}$.}
  \vskip-0.2cm
  \label{fig:ciliationEX}
\end{wrapfigure}
As an example of the effect of different ciliation arrangements on the realized exterior function, consider the NFGs in Fig.~\ref{fig:ciliationEX} where $A$ and $B$ are $n \times n$ matrices. Then, it is easy to verify that
$ Z_{\G_1} = AB $, $Z_{\G_2} = AB^{T} $, $Z_{\G_3} = A^{T}B^{T} $, and $ Z_{\G_4} = A^{T}B $,
when $Z_{\G_i}$, $i = 1, \ldots, 4$, is viewed as a matrix with rows and columns indexed by $x_1$ and $x_2$, respectively, and the product is the regular matrix product.


\begin{wrapfigure}{r}{0.25\textwidth}
  \vskip-0.25cm
	\centering \setlength{\unitlength}{0.5cm}
        \vskip0.5cm
	\begin{picture}(10,2.75)(0,-1.2)
		\put(0,0){\includegraphics[scale = .55]{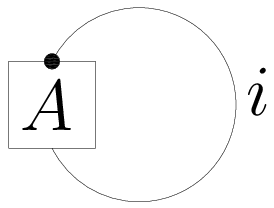} }			\put(1.2,-1){(a)}
		\put(4,0){\includegraphics[scale = .55]{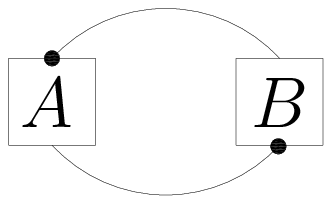} }	\put(5.5,-1){(b)}
	\end{picture}
	\caption{: (a) An NFG that realizes $\tr(A)$ and (b) an NFG that illustrates $\tr(AB) = \tr(BA)$.}
	\label{fig:trace}
\end{wrapfigure}

We end this subsection with an example. Consider the NFG $\G$ in
Fig.~\ref{fig:trace}~(a) where $A$ is an $n \times n$ matrix.  Then from the
definition of the exterior function, we have
\[ Z_{\G} = \sum_{i} A(i,i) = \tr(A). \]
This is the reason behind the name ``trace diagrams'' in \cite{Predrag:Birdtracks, Penrose:NegativeTensors, Penrose:Road}, \cite{Peterson:Unshakling}, \cite{Peterson2009:1}. Briefly, a \emph{trace diagram} is an NFG where each node has degree one, degree two, or is
associated a Levi-Civita symbol. As another example, let $A$ be an $m \times n$ matrix and $B$ be an $n \times m$ matrix. Then
traversing the NFG in Fig.~\ref{fig:trace}~(b) counter-clockwise starting from the upper edge and then starting from the lower edge illustrates the well-known identity, $\tr(AB) = \tr(BA)$. 


\subsection{Vertex Grouping and Vertex Splitting}
\label{sec:grouping_splitting}
Given an NFG $\G$, we may group any two vertex functions of $\G$, say $f$ and $g$, and replace them with a single function node realized by $\langle f | g \rangle$. We refer to this process as \emph{vertex grouping} -- also known as ``closing the box'' \cite{Loeliger:Intro}. Conversely, if $h$ is a vertex node of $\G$ that is realized by $\langle f | g \rangle$ for some $f$ and $g$, then we may split the vertex node $h$ into two function nodes $f$ and $g$. We refer to this as \emph{vertex splitting} -- also known as ``opening the box'' \cite{Loeliger:Intro}. A graphical illustration of vertex grouping and vertex splitting is shown in Fig.~\ref{fig:grouping_splitting}. Note that when we put a dashed box around $f$ and $g$ we mean they are replaced with the single function node $\langle f | g \rangle$, in other words, Figs. \ref{fig:grouping_splitting} (b) and (c) refer to the same NFG.

\begin{figure}[ht]
\setlength{\unitlength}{.5cm}
	\begin{picture}(10,4.5)(-2,0)
		\put(-1.5,1){\includegraphics[scale = .45]{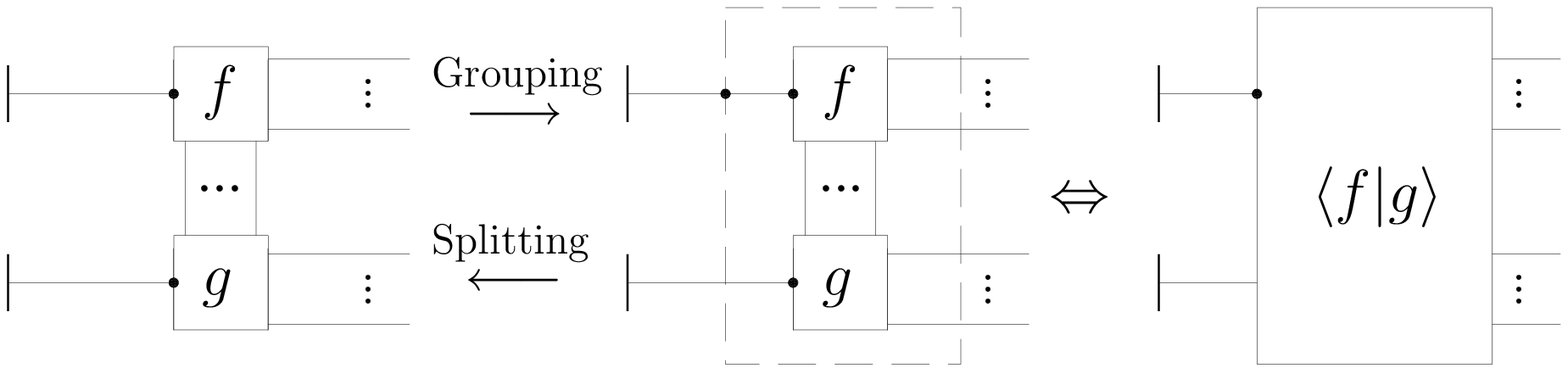} }
		\put(.5,0){(a)}		\put(7.3,0){(b)}		\put(13,0){(c)}
	\end{picture}
	\caption{: Vertex grouping and vertex splitting}
	\label{fig:grouping_splitting}
\end{figure}

Since the range of all the functions in an NFG is a field, the following lemma, which will be the main proving tool to most of the subsequent results, is clear%
\footnote{
We remark that all is needed for Lemma \ref{Lemma:Grouping_Splitting} to hold is the distributive law, and so it is still true even when the range of the local functions of the NFG is a semi-ring rather than a field.
}.
\begin{Lemma}
The exterior function realized by any NFG is invariant under vertex grouping and vertex splitting.
\label{Lemma:Grouping_Splitting}
\end{Lemma}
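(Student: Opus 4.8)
The plan is to prove invariance under vertex grouping; since vertex splitting is defined as the inverse operation, invariance under splitting then follows at once. Fix an NFG $\G = (V,E,D,f_V)$ and let $u, v \in V$ be the two distinct vertices to be grouped, with local functions $f = f_u(x_{\I})$ and $g = f_v(x_{\J})$, where $\I = \E(u)$ and $\J = \E(v)$. Grouping replaces $u$ and $v$ by a single vertex $w$ carrying the local function $\langle f | g \rangle$, yielding a new NFG $\G'$ with vertex set $V' \cup \{w\}$, where $V' = V \setminus \{u,v\}$, and the objective is to show $Z_{\G} = Z_{\G'}$. The crucial structural observation is that $\I \cap \J$ consists \emph{exactly} of the edges joining $u$ and $v$: since every (non-dangling) edge is incident on precisely two vertices and every dangling edge on precisely one, an edge lies in both $\E(u)$ and $\E(v)$ if and only if it connects $u$ and $v$, and hence cannot be incident on any third vertex. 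Writing $E_0 = \I \cap \J$, it follows that the variables $x_{E_0}$ appear only in $f$ and $g$ and in none of the remaining local functions $\{f_w : w \in V'\}$.

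With $E_1 = E \setminus E_0$ denoting the remaining edges, I would expand the sum-of-products form of $\G$ and invoke the distributive law to isolate the summation over $x_{E_0}$:
\[
Z_{\G} = \sum_{x_E} f(x_{\I})\, g(x_{\J}) \prod_{w \in V'} f_w(x_{\E(w)})
       = \sum_{x_{E_1}} \Big( \sum_{x_{E_0}} f(x_{\I})\, g(x_{\J}) \Big) \prod_{w \in V'} f_w(x_{\E(w)}).
\]
The interchange is legitimate because all sums are finite and, by the observation above, the factors $\prod_{w \in V'} f_w$ do not depend on $x_{E_0}$ and may be pulled out of the inner sum. The inner sum is by definition $\langle f | g \rangle$, which is precisely the local function attached to the grouped vertex $w$ in $\G'$, whose incident edges are $(\I \backslash \J) \cup (\J \backslash \I) \subseteq E_1$. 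Since $\G'$ has edge set $E_1$ and the same dangling set $D$, the right-hand side is exactly $Z_{\G'}$, giving $Z_{\G} = Z_{\G'}$.

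The argument is essentially bookkeeping, and the one step I would be most careful about is the structural claim that the internal variables $x_{E_0}$ occur in no other local function. This is exactly what licenses factoring $\prod_{w \in V'} f_w$ outside the inner summation, and it rests entirely on the defining property that a non-dangling edge touches exactly two vertices. Once this is in hand, the interchange and factorization amount to nothing more than the finite distributive law, consistent with the footnote's remark that the field structure is not needed beyond distributivity.
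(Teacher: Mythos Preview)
Your proof is correct and matches the paper's approach: the paper does not give a formal proof at all but simply declares the lemma ``clear'' from the field axioms (with the footnote noting that only the distributive law is used), and your argument is precisely the careful unpacking of that one-line justification via the distributive law.
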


Clearly one may group an arbitrary number of nodes in an NFG by recursively grouping pairs of vertices; 
ultimately, one may replace an NFG $\G$ with a single function node realizing the exterior function $Z_{\G}$.%

\section{Derived NFGs: Scaling, Addition, and Subtraction of NFGs}
\label{sec:Derived}
We have already seen vertex grouping and splitting as a means of manipulating NFGs. Now, given two NFGs $\G_1$ and $\G_2$ with disjoint dangling edge sets, one of the most basic ways to obtain a new NFG, say $\G$, from $\G_1$ and $\G_2$ is to stack $\G_1$ and $\G_2$ beside each other. Grouping the vertices of $\G_1$ and $\G_2$, it is clear that $Z_{\G}$ is the tensor product of $Z_{\G_1}$ and $Z_{\G_2}$. Of particular interest is the case when $\G_2$ is a single node of degree zero. In this case $Z_{\G_2}$ is a constant, say $\lambda$, and so $Z_{\G} = \lambda Z_{\G_1}$. Moreover, we write $\G = \lambda \G_1$ and say $\G$ is a \emph{scaled} version of $\G_1$ with $\lambda$ being the \emph{scaling} factor%
. 
Graphically, we show scaling by putting the scaling factor beside the NFG.%
\footnote{An equivalent way to obtain $\G$ directly from $\G_1$ is to replace any function node, say $f$, in $\G_1$ with the function node $\lambda f$ in $\G$, and to leave everything else the same.}

Given two NFGs $\G_1$ and $\G_2$ with the same set of dangling edges, we may graphically construct a \emph{compound} NFG, denoted $\G_1 + \G_2$, by drawing the two graphs with the plus sign, ``+", between them, Fig.~\ref{fig:addition}. We use this compound NFG to represent the function $Z_{\G_1} + Z_{\G_2}$, and more formally, we say the compound NFG realizes $Z_{\G_1} + Z_{\G_2}$.
Finally, the ``subtraction of $\G_2$ from $\G_1$" compound NFG, denoted $\G_1 - \G_2$, is defined as $\G_1 + (-1)\G_2$.
\begin{figure}[h]
\centering
\vskip0.25cm
\setlength{\unitlength}{.5cm}
\begin{picture}(5,3.5)(0,0)
\put(-6.5,-.75){\includegraphics[scale = .85]{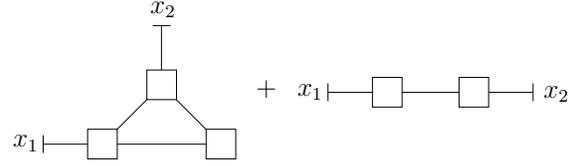}}
\end{picture}
\caption{: Addition of two NFGs.}
\label{fig:addition}
\vskip-0.5cm
\end{figure}

\section{Some Linear Algebra Using NFGs} 
\label{sec:LinAlg}
The main purpose of this section is to demonstrate how NFGs can be used to establish various identities from linear algebra in a simple and intuitive diagrammatic manner.
Throughout the section, we will liberally make use of the following lemma about the Levi-Civita symbol, which easily follows from its definition.
\begin{Lemma}
For any positive integer $n$, it holds that
\[
\varepsilon(x_{1}, x_{2}, \ldots, x_{n-1}, x_{n}) = (-1)^{n-1} \varepsilon(x_{2}, x_{3}, \ldots, x_{n}, x_{1}),
\]
and so, when $n$ is odd, $\varepsilon$ becomes invariant under cyclic-shifts of its arguments.
\label{remark:g}
\end{Lemma}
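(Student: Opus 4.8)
The plan is to argue directly from the definition of $\varepsilon$, splitting on whether the argument tuple is a permutation. First I would observe that a cyclic shift does not change the multiset of arguments, so $(x_1, \ldots, x_n)$ lies in $S_n$ (as the bottom row of its two-row symbol) if and only if $(x_2, \ldots, x_n, x_1)$ does. Consequently, in the ``otherwise'' branch of the definition both $\varepsilon(x_1, \ldots, x_n)$ and $\varepsilon(x_2, \ldots, x_n, x_1)$ vanish, and the claimed identity reduces to $0 = (-1)^{n-1} \cdot 0$, which holds trivially.

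For the remaining case, where the arguments do form a permutation, the core step is to relate the two sign values through adjacent transpositions. Reordering $(x_1, x_2, \ldots, x_n)$ into $(x_2, x_3, \ldots, x_n, x_1)$ can be realized by sliding $x_1$ to the right past each of $x_2, x_3, \ldots, x_n$ in turn, i.e., by a sequence of exactly $n-1$ swaps of adjacent arguments. Since $\varepsilon$ returns the sign of the permutation given by its two-row symbol, and interchanging two columns of that symbol (equivalently, swapping two adjacent arguments) negates the sign, each of the $n-1$ swaps contributes a factor of $-1$. Accumulating these factors yields $\varepsilon(x_1, \ldots, x_n) = (-1)^{n-1} \varepsilon(x_2, \ldots, x_n, x_1)$, as desired.

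Finally, the concluding remark follows by specializing the parity: for odd $n$ we have $n-1$ even, so $(-1)^{n-1} = 1$ and a single cyclic shift leaves $\varepsilon$ unchanged; iterating this shows that $\varepsilon$ is invariant under every cyclic shift of its arguments.

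I do not anticipate a genuine obstacle here; the only point demanding care is the bookkeeping that a one-step cyclic shift decomposes into precisely $n-1$ adjacent transpositions and that each such transposition flips the sign. This last fact is just the antisymmetry of $\sgn$ under a column interchange in the two-row representation, so once that is invoked the result is immediate.
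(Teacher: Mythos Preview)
Your argument is correct and matches the paper's stance: the paper does not spell out a proof but simply notes that the lemma ``easily follows from its definition,'' and your decomposition of the cyclic shift into $n-1$ adjacent transpositions is exactly the natural way to make that remark precise.
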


\subsection{Kronecker Delta}

\begin{wrapfigure}{r}{0.14\textwidth}
\centering\setlength{\unitlength}{.5cm}
  \vskip-1.2cm
	\begin{picture}(10,5)(0,0)
 		\put(-2.1,.75){\includegraphics[scale = .45]{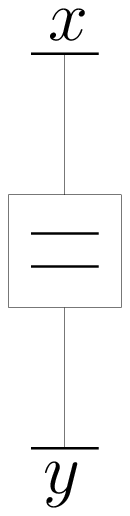}		}			  \put(1.3,0){(a)}
 		\put(0.1,.75){\includegraphics[scale = .45]{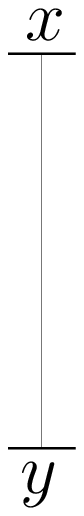}		}			\put(3.5,0){(b)}
	\end{picture}
	\caption{: Both NFGs in (a) and (b) realize $\delta(x-y)$.}
	\label{fig:Delta}
  \vskip-0.7cm
\end{wrapfigure}
The Kronecker delta, denoted $\delta(x)$, is the indicator function evaluating to one if and only if $x = 0$.
An NFG realizing $\delta(x-y)$ is shown in Fig.~\ref{fig:Delta}~(a), which from now on will be abbreviated as the NFG in (b).

\begin{wrapfigure}{r}{0.14\textwidth}
  \centering
  \vskip0.2cm
  \setlength{\unitlength}{.4cm}
  \begin{picture}(10,4.7)(0,0)
   \put(0.5,0){\includegraphics[scale = .4]{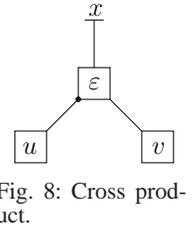}}
  \end{picture}
  \caption{: Cross product.}
  \label{fig:CrossProduct}
\end{wrapfigure}

\subsection{Cross Product}

In this subsection we look at the cross product of vectors of length 3. Let
$\G$ be as in Fig.~\ref{fig:CrossProduct}, then direct computation shows that
\vskip-0.55cm
\begin{align*}
  Z_{\G}(1)
    &= u(2)v(3) - u(3)v(2), \\
  Z_{\G}(2)
    &= u(3)v(1) - u(1)v(3), \\
  Z_{\G}(3)
    &= u(1)v(2) - u(2)v(1).
\end{align*}
\vskip-0.1cm
\noindent I.e., when viewed as vectors in $\F^{3}$, the exterior function
$Z_{\G}$ is the cross product ${u} \times {v}$.

A well-known fact, and straightforward to prove, is the following identity about the contraction of two Levi-Civita symbols, namely \vspace{-.05cm}
\begin{eqnarray*}
\sum_{t}  \varepsilon(y_1,y_2,t) \hspace{-.68cm}&& \varepsilon(t,x_2,x_1) = \\
&& \hspace{-.3cm} \delta(x_1-y_2)\delta(x_2-y_1) - \delta(x_1-y_1)\delta(x_2-y_2).
\end{eqnarray*}
A graphical illustration of the Levi-Civia contraction identity is shown in Fig.~\ref{fig:LC_Contraction}.
\begin{figure}[ht]
  \vskip0.5cm
\centering\setlength{\unitlength}{.5cm}
	\begin{picture}(10,5.75)(1,1.5)
 		\put(-2,1){\includegraphics[scale = .45]{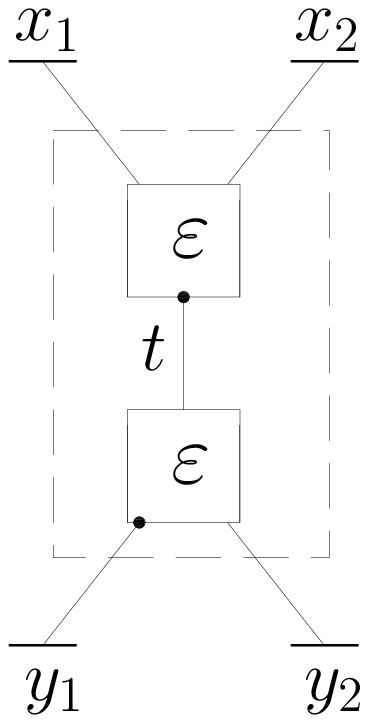}		}			  \put(3,4.5){$=$}
 		\put(4,2.5){\includegraphics[scale = .45]{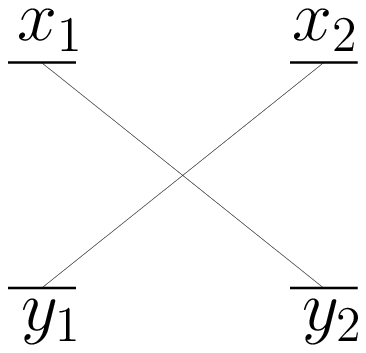}		}			  \put(8.5,4.5){$-$}
 		\put(9,2.5){\includegraphics[scale = .45]{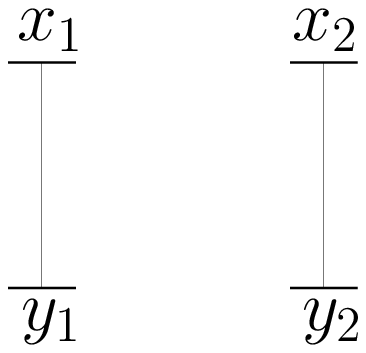}		}		
	\end{picture}
	\caption{: Contraction of two Levi-Civita symbols.}
	\label{fig:LC_Contraction}
\end{figure}

Gluing single variate nodes $w$, $s$, $u$, and $v$, respectively, on the dangling edges $x_1$, $x_2$, $y_1$, and $y_2$ in Fig.~\ref{fig:LC_Contraction} results in Fig.~\ref{fig:CrossProd_Id}. From the previous analysis (in particular, Lemmas~\ref{Lemma:Grouping_Splitting} and \ref{remark:g}), the following identities are then clear: 
\begin{eqnarray}
({u} \times {v}) \cdot ({s} \times {w})
\hspace{-.6cm} &&= \left(({u} \times {v}) \times {s} \right) \cdot {w}
= \left({w} \times ({u} \times {v})  \right) \cdot {s} \nonumber \\
&&= \left(({s} \times {w}) \times {u} \right) \cdot {v}
= \left({v} \times ({s} \times {w})  \right) \cdot u \nonumber \\
&&= ({u} \cdot {s}) ({v} \cdot {w})  - ({u} \cdot {w}) ({v} \cdot {s}). \nonumber
\label{eq:CrossProd_Id}
\end{eqnarray}

\begin{figure}[h]
\centering
\setlength\unitlength{.5cm}
	\begin{picture}(10,7)(2,0)
		\put(-2,1){\includegraphics[scale = .4]{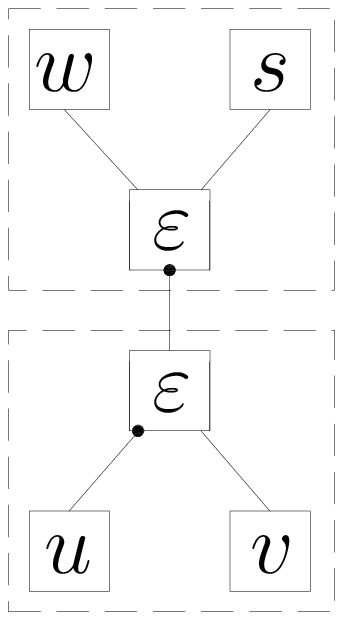}}		\put(1.9,3.7){$=$}
		\put(2.5,1){\includegraphics[scale = .4]{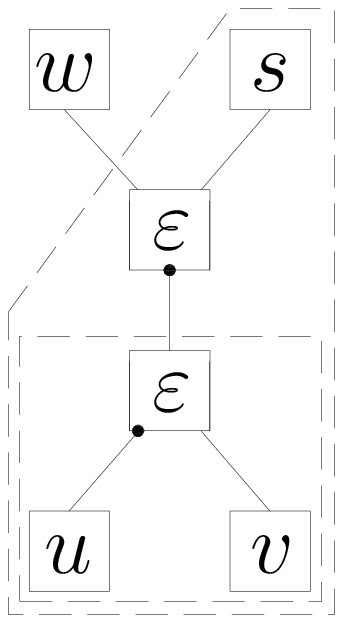}}		\put(6.5,3.7){$=$}
		\put(7,1){\includegraphics[scale = .4]{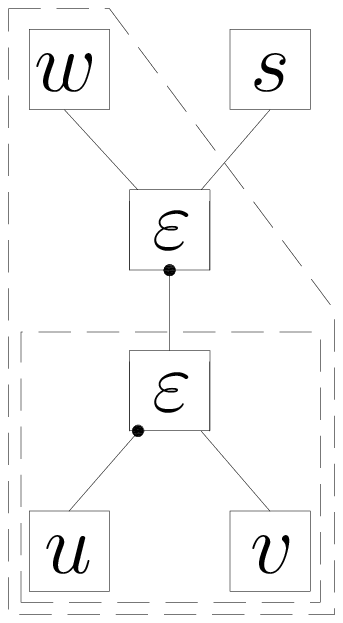}}			\put(10.8,3.7){$=$}
		\put(11.5,1){\includegraphics[scale = .4]{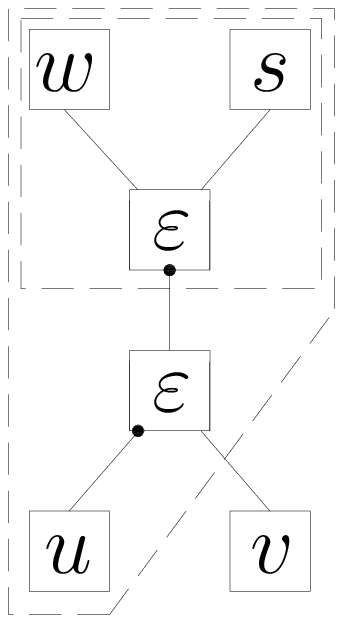}}
	\end{picture}
	\begin{picture}(10,7)(2,0)
		\put(-1.5,4.6){$=$}
		\put(-1.2,2){\includegraphics[scale = .4]{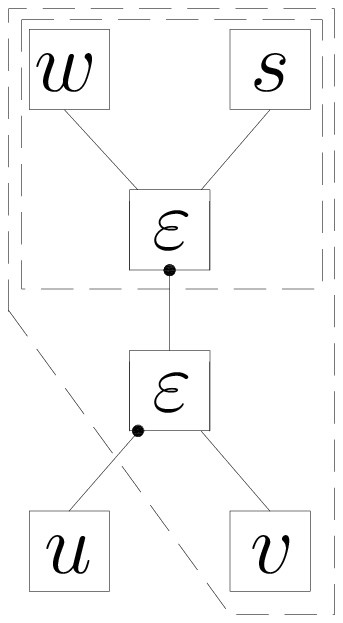}}		\put(2.8,4.6){$=$}
		\put(3,2){\includegraphics[scale = .4]{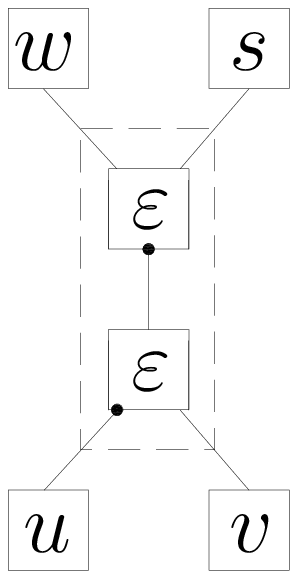}}		\put(6.3,4.6){$=$}
		\put(7,3){\includegraphics[scale = .4]{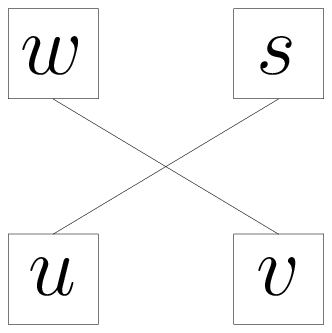}}		\put(10.8,4.6){$-$}
		\put(11.5,3){\includegraphics[scale = .4]{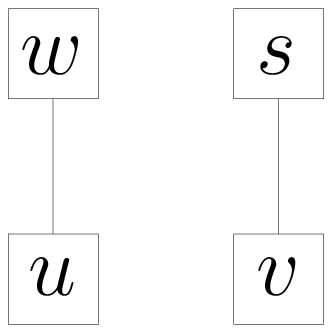}}		
	\end{picture}
        \vskip-1.25cm
	\caption{: A proof of $({u} \times {v}) \cdot ({s} \times {w}) \!=\!
          \cdots \!=\! ({u} \cdot {s}) ({v} \cdot {w})  - ({u} \cdot {w}) ({v}
          \cdot {s})$. }
        \vskip-0.5cm
	\label{fig:CrossProd_Id}
\end{figure}

Many identities regarding the cross product may be obtained in the same way. For instance, let $m_a$, $m_b$, $m_c$, and $m_d$ be arbitrary positive integers
and let $\{a_{i} : 1 \leq i \leq m_a\}$, $\{b_{i} : 1 \leq i \leq m_b\}$, $\{c_{i} : 1 \leq i \leq m_c\}$,
and $\{d_{i} : 1 \leq i \leq m_d\}$ be four collections of length-three vectors. 
Further, let $A$ be the matrix whose $i$th column is $a_{i}$, i.e., $A = (a_{i}: 1\leq i \leq m_a )$, and similarly let $B = (b_{i} : 1 \leq i \leq m_b)$, $C = (c_{i} : 1 \leq i \leq m_c)$,
and $D = (d_{i} : 1 \leq i \leq m_d)$.
We claim that if $m_a = m_d = m$ and $m_b = m_c = m'$ for some $m$ and $m'$, then
\[
\sum_{i=1}^{m}\sum_{j=1}^{m'} (a_i \times b_j) \cdot (c_j \times d_i) \!=\! \tr(AD^{T}BC^{T}) - \tr(BC^{T}) \tr(AD^{T}).
\]
 A detailed proof of this identity is shown in Fig.~\ref{fig:CrossProd_Id_m}.
\begin{figure}
\vskip0.5cm
\centering
\setlength\unitlength{.5cm}
	\begin{picture}(10,6)(2,1.25)
		\put(-2,3.4){$\sum_{i,j} (a_i \times b_j) \! \cdot \! (c_j \times d_i) \! = \! \sum_{i,j}$}
		\put(6.5,0){\includegraphics[scale = .5]{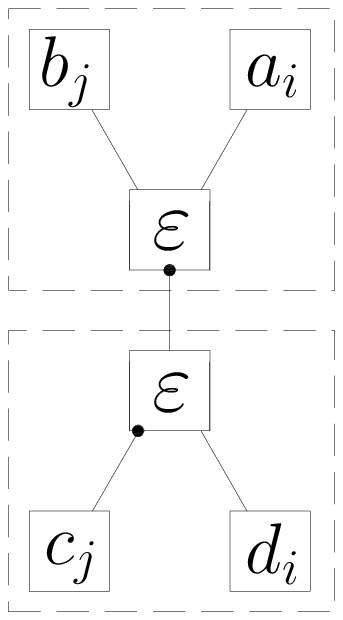}}		\put(10.7,3.4){$= \sum_{i,j}$}
		\put(11.6,0){\includegraphics[scale = .5]{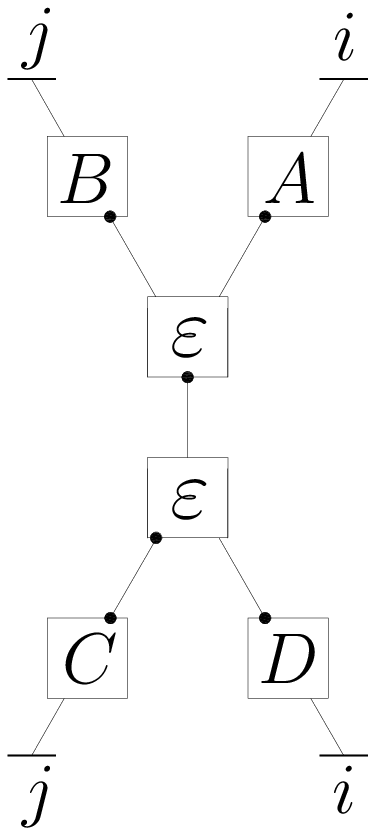}}		
	\end{picture}
	\begin{picture}(10,7)(2,.75)
		\put(-1.5,3.4){$=$}
		\put(-1.2,0){\includegraphics[scale = .5]{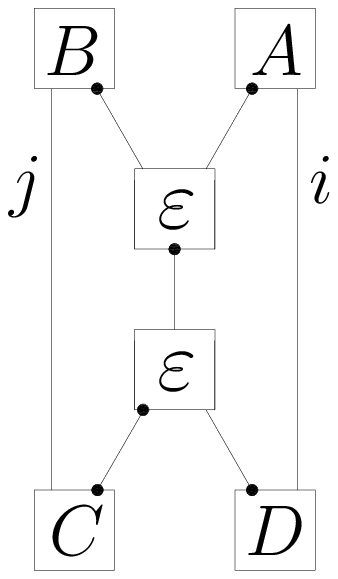}}		\put(3,3.4){$=$}
		\put(4,1.5){\includegraphics[scale = .5]{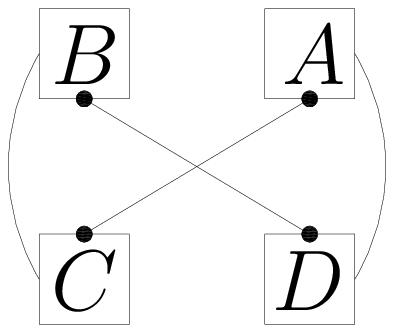}}		\put(8.5,3.4){$-$}
		\put(9,1.5){\includegraphics[scale = .5]{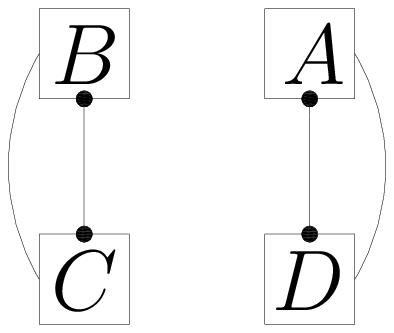}}	
		\put(3,1){$= \ \ \tr(AD^{T}BC^{T}) - \tr(BC^{T}) \tr(AD^{T})$}
	\end{picture}
	\caption{}
	\label{fig:CrossProd_Id_m}
        \vskip0cm
\end{figure}

We emphasize that in practice, one does not need to go through all the steps in Fig.~\ref{fig:CrossProd_Id_m}, i.e., 
one should be able to deduce the identity easily from the last two NFGs.
Slight variations of the NFG before the last equivalence in Fig.~\ref{fig:CrossProd_Id_m} 
may be used to prove 
more identities about the cross product. For instance, when $m_a = m_b = m$ and $m_c = m_d = m'$ for some $m$ and $m'$, 
Fig.~\ref{fig:CrossProd_Id2_m} (a) shows that
\[
\sum_{i=1}^{m}\sum_{j=1}^{m'} (a_i \times b_i) \cdot (c_j \times d_j) = \tr(AB^{T}DC^{T}) - \tr(AB^{T}CD^{T}),
\]
and when $m_{a} = 1$ and $m_b = m_c = m$ for some $m$, Fig.~\ref{fig:CrossProd_Id2_m}~(b) proves
$\sum_{i = 1}^{m} (a_1 \times b_i) \times c_i = \left(BC^{T} \right) a_1 - \tr \left(BC^{T} \right) a_1$.

We invite the reader to verify some of the identities above using traditional methods and contrast with the current diagrammatic approach.

\begin{figure}[h]
\centering
\setlength\unitlength{.5cm}
	\begin{picture}(10,5.5)(2,0)
		\put(-2.5,0){\includegraphics[scale = .45]{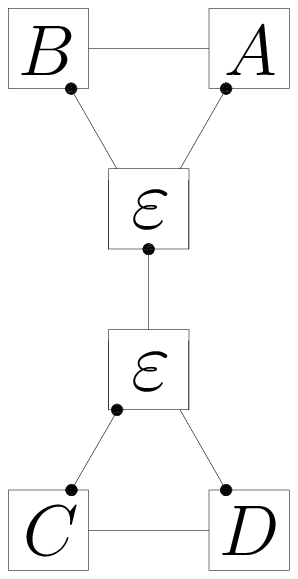}}
		\put(.4,3){$=$} \put(.7,1.7){\includegraphics[scale = .35]{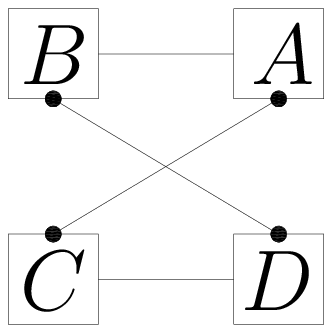}}
		\put(3.3,3){$-$} \put(3.5,1.7){\includegraphics[scale = .35]{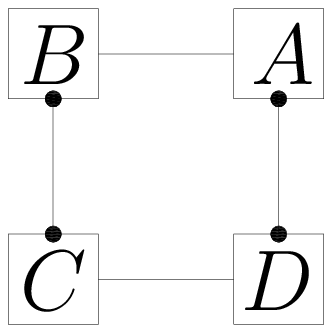}}
		\put(2.7,0){(a)}		
		\put(6.7,0){\includegraphics[scale = .45]{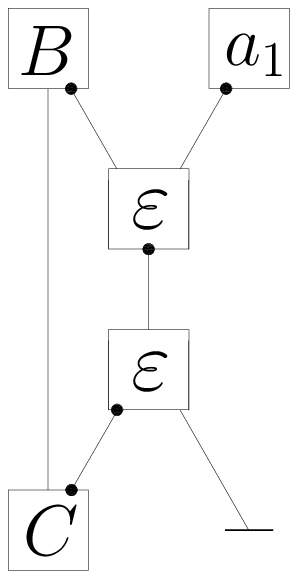}}	
		\put(9.5,3){$=$} \put(10.1,1.7){\includegraphics[scale = .35]{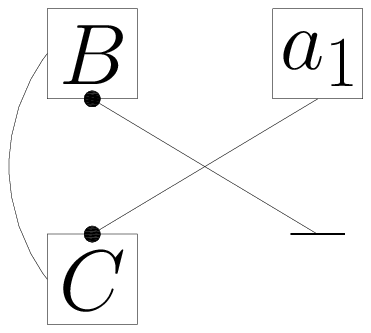}}
		\put(12.6,3){$-$} \put(13.2,1.7){\includegraphics[scale = .35]{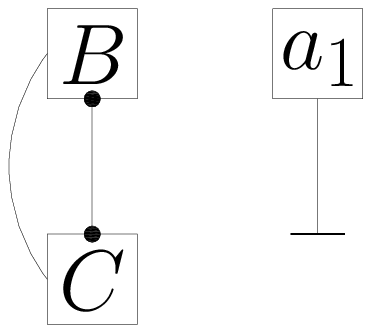}}		
		\put(11,0){(b)}
	\end{picture}
	\caption{}
        \vskip-0.65cm
	\label{fig:CrossProd_Id2_m}
\end{figure}

\subsection{Determinant}
For any $n \times n$ matrix $A$, the determinant is defined as 
\[
\det(A) = \sum_{\sigma \in S_{n}} \sgn(\sigma) \prod_{j = 1}^{n} A(j, \sigma(j)).
\]
From the definitions of exterior function and determinant, it follows that
\begin{figure}[H]
	\centering \setlength{\unitlength}{0.5cm}	
	\begin{picture}(10,2)(0,1.3)
		\put(-3,0){\includegraphics[scale = .45]{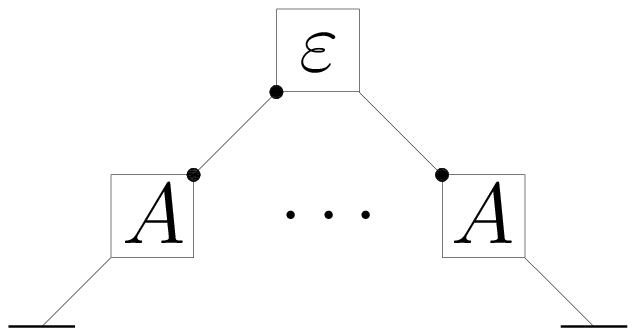}	}			
		\put(4,2){$= \ \ \det(A)$}	
		\put(6.5,0){\includegraphics[scale = .45]{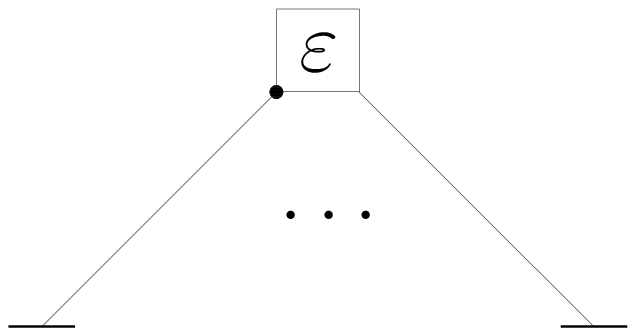}	}		
	\end{picture}
\end{figure}
From this, several properties of the determinant may be obtained, for instance
\begin{figure}[H]
	\centering \setlength{\unitlength}{0.5cm}	
	\begin{picture}(10,2.5)(0,0)
		\put(-4,2){$\det(AB)$}
		\put(-2.5,0){\includegraphics[scale = .37]{fig/Det1_I_fig.ps}	}	 \put(2.6,2){$=$}		
		\put(3,0){\includegraphics[scale = .37]{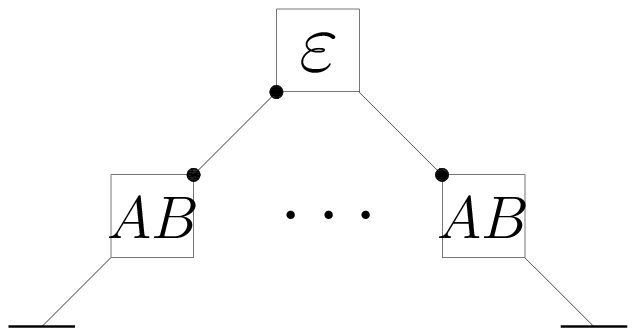}	}  \put(8.2,2){$=$}
		\put(9,0){\includegraphics[scale = .37]{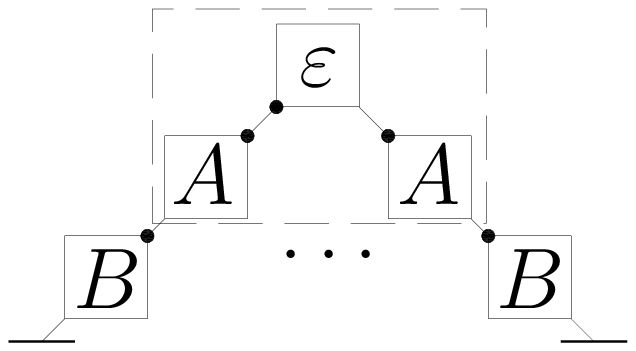}	}
	\end{picture}
	\begin{picture}(10,2)(0,.75)
		\put(-4,2){$= \det(A)$}
		\put(-2.5,0){\includegraphics[scale = .37]{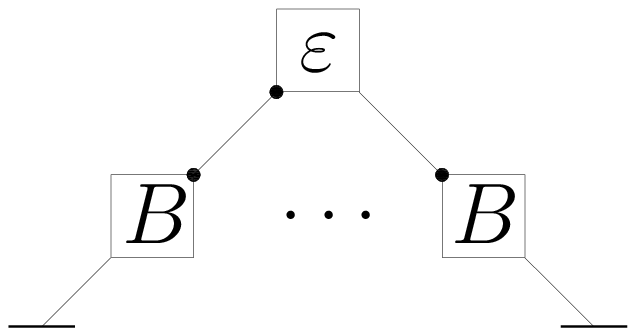}	}		\put(2.6,2){$= \ \det(A)\det(B)$}
		\put(6.2,0){\includegraphics[scale = .37]{fig/Det1_I_fig.ps}	}
	\end{picture}%
\end{figure}
\vskip-0.4cm
\noindent shows that $\det(AB) = \det(A)\det(B)$ for any $n \times
n$ matrices $A$ and $B$.  Another example follows from \vskip0.1cm
\begin{figure}[H]
	\centering \setlength{\unitlength}{0.5cm}	
	\begin{picture}(10,3.5)(0,0)
		\put(-3.8,2){$\det(A^{T})$} \put(-2.2,0){\includegraphics[scale = .37]{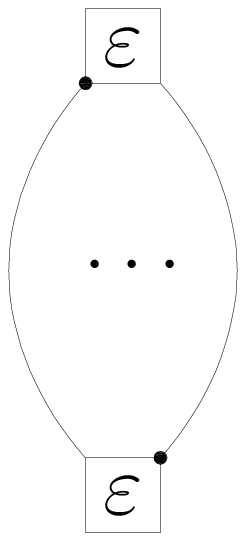}	}	 
		\put(.8,2){$=$} \put(1.2,0){\includegraphics[scale = .37]{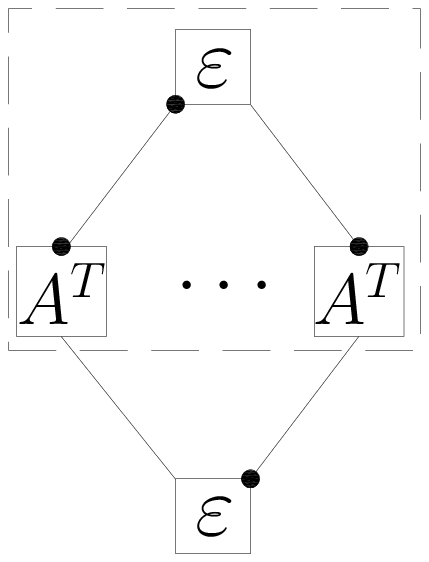}	}  
		\put(4.9,2){$=$} \put(5.4,0){\includegraphics[scale = .37]{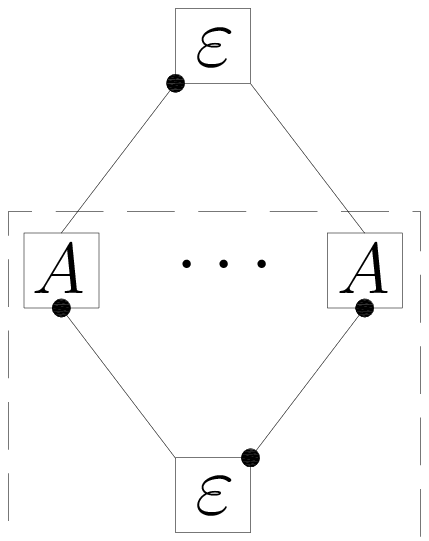}	}
		\put(9.1,2){$= \det(A)$} \put(11,0){\includegraphics[scale = .37]{fig/Det_Transpose_I_fig.ps}	}	 
	\end{picture}
\end{figure}
\vskip-0.45cm
\noindent showing $\det(A^{T}) = \det(A)$.  As a final example, we have
\begin{figure}[H]
\vskip0.25cm
\centering
\setlength\unitlength{.5cm}
\begin{picture}(10,4)(0,2)
	\put(-5.5,0){\includegraphics[scale = .52]{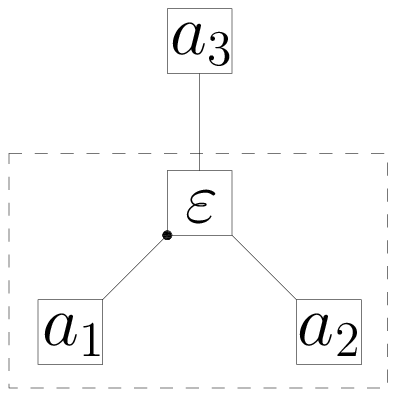}}
	\put(1,3.6){$=$} \put(-1,0){\includegraphics[scale = .52]{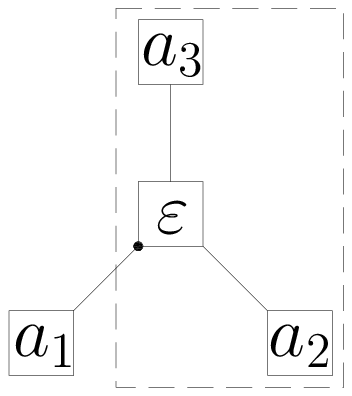}}
	\put(5,3.6){$=$} \put(3.8,0){\includegraphics[scale = .52]{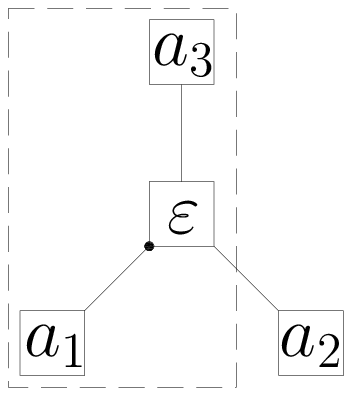}}
	\put(9,3.6){$=$} \put(8.2,.3){\includegraphics[scale = .45]{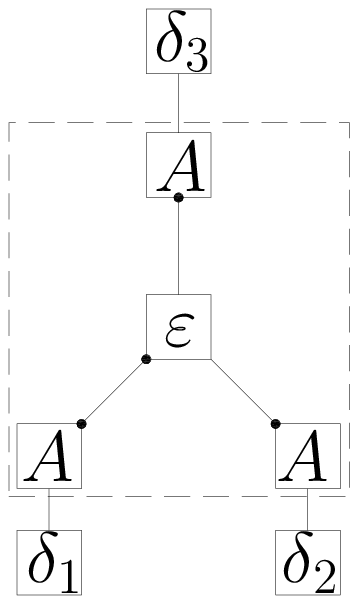}}
\end{picture}
\end{figure}
\hspace{-.42cm} which proves
\[
(a_1 \times a_2)\cdot a_3 = (a_2 \times a_3)\cdot a_1 = (a_3 \times a_1)\cdot a_2 = \det(A),
\]
where $A$ is the $3 \times 3$ matrix $(a_1 \ a_2 \ a_3)$ and for each $i \in \{1,2,3 \}$, $\delta_{i}(x) = \delta(x-i)$ for all $x \in \{1,2,3\}$.


\subsection{Pfaffian}

Given a $2n \times 2n$ skew-symmetric matrix $A$, the \emph{Pfaffian} of $A$, denoted $\Pf(A)$, is defined as
\[
\Pf(A) = \frac{1}{2^n n!} \sum_{\sigma \in S_{2n}} \sgn(\sigma) \prod_{i = 1}^{n} A\big(\sigma(2i-1), \sigma(2i)\big).
\]
Before we proceed, we need the following lemma.
\begin{Lemma}
  For any positive integer $n$, let $\tau \in S_{2n}$ be such that $\tau(2k-1)
  = k$ and $\tau(2k) = 2n-(k-1)$, $1 \leq k \leq n$. Then $\tau$ is an even
  permutation.
 \label{lemma:perm}
\end{Lemma}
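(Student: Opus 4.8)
The plan is to determine the parity of $\tau$ directly from its number of inversions, using the standard identity $\sgn(\tau) = (-1)^{\mathrm{inv}(\tau)}$, where $\mathrm{inv}(\tau)$ counts the pairs of positions $p < q$ with $\tau(p) > \tau(q)$. The first step is to record $\tau$ in one-line notation and extract the one structural feature that makes everything tractable: the odd positions $1, 3, \ldots, 2n-1$ carry the values $1, 2, \ldots, n$ in \emph{increasing} order, while the even positions $2, 4, \ldots, 2n$ carry the values $2n, 2n-1, \ldots, n+1$ in \emph{decreasing} order.

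With this in hand, I would partition the inversions according to the parities of the two positions involved. Between two odd positions the values increase, so these contribute nothing; between two even positions the values strictly decrease, so every one of the $\binom{n}{2}$ such pairs is an inversion. This disposes of the two ``pure'' cases immediately, leaving only the mixed odd/even pairs.

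The main obstacle -- and essentially the only place needing care -- is the mixed count, since whether a pair $(2a-1,\,2b)$ is inverted depends both on which of the two positions comes first and on the comparison of the values $a$ and $2n-b+1$. I would split this into the sub-cases $a \leq b$ and $a > b$ and rewrite each inversion condition as an inequality on $a+b$. The key simplification is that $a, b \leq n$ forces $a+b \leq 2n$ always: this makes the sub-case requiring $a+b \geq 2n+2$ empty, and leaves exactly the pairs with $a > b$, contributing another $\binom{n}{2}$ inversions.

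Summing the contributions yields $\mathrm{inv}(\tau) = 2\binom{n}{2} = n(n-1)$, which is a product of two consecutive integers and hence even, so $\sgn(\tau) = +1$ and $\tau$ is even. As an independent cross-check one may instead factor $\tau = \beta \circ \alpha$, where $\alpha$ is the inverse perfect shuffle $\alpha(2k-1)=k$, $\alpha(2k)=n+k$, and $\beta$ fixes $\{1,\ldots,n\}$ while reversing $\{n+1,\ldots,2n\}$; each factor has $\binom{n}{2}$ inversions, and multiplicativity of the sign gives the same conclusion without tracking the cycle structure (which, incidentally, is not a single cycle in general, so the cycle-decomposition route is less clean than the inversion count).
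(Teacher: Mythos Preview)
Your proof is correct. The inversion count is carried out cleanly: the only delicate step is the mixed odd/even case, and your observation that $a+b\le 2n$ forces one sub-case to be empty and the other to be full is exactly right, yielding $\mathrm{inv}(\tau)=2\binom{n}{2}=n(n-1)$.

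The paper takes a different, more constructive route: it realizes $\tau$ as a composition of explicit moves on the tuple $(1,\ldots,2n)$, first reversing the second half with $\lfloor n/2\rfloor$ swaps and then interleaving the two halves via $n-1$ successive cyclic shifts costing $\sum_{m=2}^{n}(m-1)=\tfrac{n(n-1)}{2}$ swaps, and then checks that $\lfloor n/2\rfloor+\tfrac{n(n-1)}{2}$ is always even. Your main argument is shorter and avoids that final parity verification entirely; your cross-check factorization $\tau=\beta\circ\alpha$ is in fact the same decomposition the paper uses (reversal of the top half followed by the inverse shuffle), just with the signs read off via inversions rather than counted as explicit swaps.
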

\begin{proof}
For brevity, we use the tuple $(i_1, \ldots, i_{2n})$ 
to refer to the permutation 
$\left(	\begin{array}{ccc}	1&\cdots&{2n}\cr i_1&\cdots&i_{2n}		\end{array}		\right)$. Now, given the identity permutation $(1, \ldots, n, n+1, \ldots, 2n)$, it is possible to reflect the second half of the permutation so that it becomes $(1, \ldots, n, 2n, \ldots, n+1)$; this clearly can be done using $\left\lfloor \frac{n}{2} \right\rfloor$ swaps. Next, we interleave the two halves of the $2n$-tuple by performing the following $n-1$ steps: Starting with $k = 1$ and ending with $k = n-1$, at step $k$ imagine a window that starts just before the $2k$th element of the tuple and ends just after the $n+k$th element (i.e., it covers $n-k+1$ elements). For the elements of the tuple covered by the window, we perform a cyclic-shift by one position to the right. It is easy to check that after the $(n-1)$th step we arrive to our desired permutation. Clearly, a one-position cyclic shift on $m$ elements can be accomplished using $m-1$ swaps. Hence, interleaving the tuple requires $\sum_{m = 2}^{n} (m-1) = \frac{n(n-1)}{2}$ swaps. Therefore, in total our permutation can be written in terms of $\left\lfloor \frac{n}{2} \right\rfloor + \frac{n(n-1)}{2}$ swaps and the claim follows by noting that this number is even for any positive integer $n$.
\end{proof}

The following proposition affirmatively proves Peterson's conjecture on the Pfaffian \cite{Peterson:Cayley-Hamilton} \cite{Peterson:Unshakling}.
\begin{Prop}
Let $A$ be a $2n \times 2n$ skew-symmetric matrix and let $\G$ be as in
Fig.~\ref{fig:prop_Pfaffian}. Then $Z_{\G} = n! \cdot 2^n \cdot \emph{\Pf(A)}$.
\label{prop:Pfaffian}
\begin{figure}[H]
  \vskip0.25cm
	\centering\setlength{\unitlength}{0.5cm}	
	\begin{picture}(10,3)(0,0)
		\put(-1,0){\includegraphics[scale = .6]{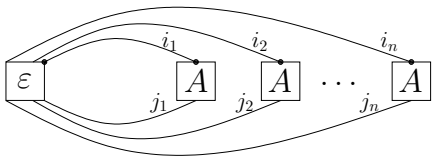}	}
	\end{picture}
	\caption{: The NFG $\G$ from Proposition~\ref{prop:Pfaffian}.}
	\label{fig:prop_Pfaffian}
\end{figure}
\end{Prop}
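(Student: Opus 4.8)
The plan is to read $Z_\G$ off the figure and then reshape the resulting sum into the definition of $\Pf(A)$. Grouping all nodes of $\G$ (Lemma~\ref{Lemma:Grouping_Splitting}) and summing over the internal edges, the single Levi-Civita node of degree $2n$ whose arms are paired off by the $n$ copies of $A$ yields an expression of the form
\[
Z_\G = \sum_{x_1, \ldots, x_{2n}} \varepsilon(x_1, \ldots, x_{2n}) \prod_{k=1}^{n} A\big(x_k,\, x_{2n-k+1}\big),
\]
where the nested pairing $(k,\,2n-k+1)$ records the non-crossing way in which the matrices connect the arms of the $\varepsilon$-node, and the ciliation fixes the argument order inside each $A$. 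The alphabet is $N = \{1, \ldots, 2n\}$ since $A$ is $2n \times 2n$.

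First I would exploit that $\varepsilon(x_1, \ldots, x_{2n})$ vanishes unless $(x_1, \ldots, x_{2n})$ is a permutation of $N$, in which case, writing $x_j = \pi(j)$, it equals $\sgn(\pi)$. This collapses the sum over the alphabets to a sum over $\pi \in S_{2n}$:
\[
Z_\G = \sum_{\pi \in S_{2n}} \sgn(\pi) \prod_{k=1}^{n} A\big(\pi(k),\, \pi(2n-k+1)\big).
\]
Next I would reindex to bring the arguments of $A$ into the consecutive form $(\sigma(2i-1), \sigma(2i))$ of the definition of $\Pf$. The permutation $\tau$ of Lemma~\ref{lemma:perm} is tailored for exactly this: setting $\sigma = \pi \circ \tau$ gives $\sigma(2i-1) = \pi(\tau(2i-1)) = \pi(i)$ and $\sigma(2i) = \pi(\tau(2i)) = \pi(2n-i+1)$, so the product becomes $\prod_{i=1}^{n} A(\sigma(2i-1), \sigma(2i))$, and as $\pi$ ranges over $S_{2n}$ so does $\sigma$, bijectively.

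The crux — and the reason Lemma~\ref{lemma:perm} is needed — is the sign bookkeeping: $\sgn(\sigma) = \sgn(\pi)\,\sgn(\tau)$, and since $\tau$ is even, $\sgn(\sigma) = \sgn(\pi)$, so no stray sign is introduced by the relabeling. Hence
\[
Z_\G = \sum_{\sigma \in S_{2n}} \sgn(\sigma) \prod_{i=1}^{n} A\big(\sigma(2i-1),\, \sigma(2i)\big) = n!\,2^n\,\Pf(A),
\]
the last equality being the definition of the Pfaffian. The main obstacle I anticipate is not any single computation but correctly transcribing the figure into the first display, i.e.\ confirming that the ciliations really induce the nested pairing $(x_k, x_{2n-k+1})$ feeding the arguments to $A$ in that order; should any arc present its arguments in the opposite order, I would absorb the discrepancy through the skew-symmetry $A(x_{2n-k+1}, x_k) = -A(x_k, x_{2n-k+1})$ and track the resulting overall sign. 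Once the reading of $Z_\G$ is fixed and the even-ness of $\tau$ is in hand, the remainder is a routine relabeling.
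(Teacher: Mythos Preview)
Your proposal is correct and follows essentially the same route as the paper: read $Z_\G$ as a Levi-Civita sum with nested pairing $(k,\,2n-k+1)$, restrict to permutations, and reindex via $\sigma=\pi\circ\tau$ with $\tau$ from Lemma~\ref{lemma:perm}, using its evenness to preserve the sign. The paper's proof differs only in notation (it writes $\sigma,\sigma'$ where you write $\pi,\sigma$), and your caveat about possibly having to invoke skew-symmetry if the ciliations reverse an arc is a sensible hedge, though in the paper's figure the reading you gave is the intended one.
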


\vspace{-.1cm}
\begin{proof}
We have
\begin{eqnarray*}
Z_{\G} &=& \!\!\!\!\!\!\!\!\!\!\!\!
\sum_{(i_1, \ldots, i_n, j_{n}, \ldots, j_1) \in N^{2n}} \!\!\!\!\!\!\!\!\!\!
\varepsilon(i_1, \ldots, i_n, j_n, \ldots, j_1)  \prod_{k = 1}^{n} A(i_k, j_k) \\ 
&=& \sum_{\sigma \in S_{2n}} \sgn(\sigma) \prod_{k = 1}^{n} A\big(\sigma(k), \sigma\left(2n-(k-1)\right)\big).
\end{eqnarray*}
Now let $\sigma = \left( \begin{array}{ccccccc} 1&2&\ldots&n&n+1&\ldots&2n\cr
    i_1&i_2&\ldots&i_n&j_{n}&\ldots&j_1 \end{array} \right)$ and $\sigma' =
\left( \begin{array}{ccccccc} 1&2&3&4&\ldots&2n-1&2n\cr
    i_1&j_1&i_2&j_2&\ldots&i_n&j_n \end{array} \right)$ and note that $\sigma'
= \sigma \circ \tau$ where $\tau$ is as in Lemma \ref{lemma:perm}. Hence,
$\sgn(\sigma') = \sgn(\sigma) \cdot \sgn(\tau) = \sgn(\sigma)$, where the last
equality follows from Lemma \ref{lemma:perm}. Further, note that for all $k
\in \{1, \ldots, n\}$ we have $\sigma(k) = i_k = \sigma'(2k-1)$ and $\sigma(2n
- (k-1)) = j_k = \sigma'(2k)$. Finally, it is clear that as $\sigma$ runs over
$S_{2n}$, $\sigma'$ runs over $S_{2n}$. Hence, $Z_{\G} = \sum_{\sigma' \in
  S_{2n}} \sgn(\sigma') \prod_{k = 1}^{n} A\big(\sigma'(2k-1),
\sigma'(2k)\big)$, and the claim follows by the definition of the Pfaffian.
\end{proof}

\section{Conclusion}
\label{sec:Conc}
 This paper presents NFGs as an intuitive tool that facilitates proofs and enhances understanding of some notions and identities in linear algebra. We feel that this new look at NFGs, together with the related work \cite{Forney2011:PartitionFunction},  may result in a better understanding of NFGs as a model of codes and provide deeper understanding of the message-passing algorithms. Further, it is our hope that  the use of NFGs as an algebraic tool may open doors for developing new algorithmic tools that 
 are useful in a broader range of disciplines.

\section*{Acknowledgment}
The authors wish to thank David Forney for several discussions on the subject.

\bibliography{C:/Research_Latex/bibliographys/Holographic_revised}

\end{document}